\newtheorem{theorem}{Theorem}
\newtheorem{lemma}{Lemma}
\newtheorem{definition}{Definition}
\begin{document}

\title{\LARGE On Mean Absolute Error for Deep Neural Network Based Vector-to-Vector Regression}

\author{Jun Qi,~\IEEEmembership{Student Member,~IEEE},
 	 Jun Du,~\IEEEmembership{Member,~IEEE},
	 Sabato Marco Siniscalchi,~\IEEEmembership{Senior Member,~IEEE},\\
	 Xiaoli Ma,~\IEEEmembership{Fellow,~IEEE},
         and~Chin-Hui Lee,~\IEEEmembership{Fellow,~IEEE}
\thanks{J. Qi, X. Ma and C.-H. Lee are with the School of Electrical and Computer Engineering, Georgia Institute of Technology, Atlanta,
GA, 30332 USA e-mail: (qij41@gatech.edu, xiaoli@gatech.edu, chl@ece.gatech.edu).}
\thanks{J. Du is with the National Engineering Laboratory for Speech and Language Information Processing, University of Science and Technology of China, Hefei 230027, China (e-mail: jundu@ustc.edu.cn).}
\thanks{S. M. Siniscalchi is with the Faculty of Architecture and Engineering, University of Enna ``Kore'', Enna 94100, Italy, and also with the Georgia Institute of Technology, Atlanta, GA 30332 USA (e-mail: marco.siniscalchi@unikore.it).}
}

\maketitle

\begin{abstract}
In this paper, we exploit the properties of mean absolute error (MAE) as a loss function for the deep neural network (DNN) based vector-to-vector regression. The goal of this work is two-fold: (i) presenting performance bounds of MAE, and (ii) demonstrating new properties of MAE that make it more appropriate than mean squared error (MSE) as a loss function for DNN based vector-to-vector regression. First, we show that a generalized upper-bound for DNN-based vector-to-vector regression can be ensured by leveraging the known Lipschitz continuity property of MAE. Next, we derive a new generalized upper bound in the presence of additive noise. Finally, in contrast to conventional MSE commonly adopted to approximate Gaussian errors for regression, we show that MAE can be interpreted as an error modeled by Laplacian distribution. Speech enhancement experiments are conducted to corroborate our proposed theorems and validate the performance advantages of MAE over MSE for DNN based regression.
\end{abstract}

\begin{IEEEkeywords}
Mean absolute error, mean squared error, deep neural network, vector-to-vector regression, speech enhancement
\end{IEEEkeywords}

\IEEEpeerreviewmaketitle

\section{Introduction}
\label{sec1}
\IEEEPARstart{M}{ean} absolute error (MAE), originated from a measure of average error~\cite{willmott1985statistics}, is often employed in assessing vector-to-vector (a.k.a. multivariate) regression models \cite{borchani2015}. Another form of average error is a root-mean-squared error (RMSE), but MAE was shown to outperform RMSE for measuring an average model accuracy in most situations except the Gaussian noisy scenarios~\cite{chai2014root, willmott2005advantages, willmott2009ambiguities}. An exception occurs when the expected error satisfies Gaussian-distributed and enough training samples are available~\cite{chai2014root}. Besides, mean squared error (MSE) is the squared form of RMSE and is commonly adopted as a regression loss function~\cite{hestie2001, bishop:2006, mohri2018foundations, shalev2014understanding}.

In the literature, there are some discussions on the relationship between MSE and MAE. Berger~\cite{Berger:1985} presented pros and cons of squared and absolute errors from an estimation point of view. In \cite{vapnik95}, a better solution to support vector machines could be obtained based on a loss function of an absolute difference instead of the quadratic error. Li et al.\cite{ning_li:2020} discussed the effectiveness of MAE and its variations when training a deep model for energy load forecasting; Imani et al.~\cite{imani2018improving} investigated distributional losses, including both MAE and MSE, for regression problems from the perspective of efficient optimization. Pandey and Wang~\cite{pandey2018adversarial} exploited the MAE and MSE loss functions for generative adversarial nets (GANs). However, a comparison between MAE and MSE in terms of generalization capabilities \cite{vapnik1971, charles2017stability, qi2019theory} is still missing in theory. Thus, this paper aims at bridging this gap. In particular, we investigate MAE and MSE in terms of performance error bounds and robustness against various noises in the context of the deep neural network (DNN) based vector-to-vector regression, since DNNs offer better representation power and generalization capability in large-scale regression problems, such as those addressed in \cite{lorencs2008, Takeda:2007, xu2015regression, qi2020tensor}.

In this paper, we prove that the Lipschitz continuity property~\cite{mangasarian1987lipschitz, o2006metric}, which holds for MAE but not for MSE, is a necessary condition to derive the upper bound on the Rademacher complexity \cite{bartlett2002rademacher, bartlett2005} of DNN based vector-to-vector regression functions, as we have demonstrated in~\cite{qi2020theory}. Next, we show that the MAE Lipschitz continuity property can also result in a new upper bound on the generalization capability of DNN-based vector-to-vector regression in the presence of additive noise~\cite{Su2019, yang2020characterizing, weng2018evaluating}. Moreover, another contribution of this work is that we establish a connection between the MAE loss function and Laplacian distribution~\cite{kotz2012laplace}, which is in contrast to the MSE loss function associated with Gaussian distribution~\cite{goodman1963statistical}. In doing so, we can highlight the key advantages of MAE over MSE by comparing the characteristics of those two distributions. 

Our experiments of speech enhancement are used as the regression task to assess our theoretical derivations and empirically verify the effectiveness of MAE over MSE. We choose regression-based speech enhancement because it is an unbounded mapping from $\mathbb{R}^{d} \rightarrow \mathbb{R}^{q}$, where enhanced speech features are expected to closely approximate the clean speech features in regression. 

The remainder of this paper is presented as follows: Section~\ref{sec2} introduces some necessary math notations and theorems. Sections~\ref{sec3}, and ~\ref{sec4} highlight some key properties of the MAE loss function for DNN based vector-to-vector regression. Section~\ref{sec5} associates the MAE loss function with the Laplacian distribution. The related experiments of speech enhancement are given in Section~\ref{sec6}, and Section~\ref{sec7} concludes this work.

\section{Preliminaries}
\label{sec2}

1. \textbf{Notations}

\begin{itemize}
\item $f\circ g$:  The composition of functions $f$ and $g$.
\item $||\textbf{x}||_{p}$: $L_{p}$ norm of the vector $\textbf{x}$. 	
\item $\mathbb{R}^{d}$: $d$-dimensional real coordinate space. 				
\item $[n]$: An integer set $\{ 1, 2, ..., n\}$.
\item $\textbf{1}$: Vector of all ones.
\end{itemize}

\vspace{1.5mm}

\noindent 2. \textbf{Lipschitz continuity}
\begin{definition}
A function $f$ is $\beta$-Lipschitz continuous if for any $\textbf{x}, \textbf{y} \in \mathbb{R}^{n}$, for an integer $p \ge 1$, 
\begin{equation}
|| f(\textbf{x}) - f(\textbf{y}) ||_{p} \le \beta ||\textbf{x} - \textbf{y} ||_{p}.
\end{equation}
\end{definition}

\vspace{1.5mm}

\noindent 3. \textbf{Mean Absolute Error (MAE)}
\begin{definition}
MAE measures the average magnitude of absolute differences between $N$ predicted vectors $S = \{ \textbf{x}_{1}, \textbf{x}_{2}, ..., \textbf{x}_{N}\}$ and $S^{*} = \{ \textbf{y}_{1}, \textbf{y}_{2}, ..., \textbf{y}_{N}\}$, the corresponding loss function is defined as:
\begin{equation}
\mathcal{L}_{MAE}(S, S^{*}) = \frac{1}{N} \sum\limits_{i=1}^{N} || \textbf{x}_{i} - \textbf{y}_{i} ||_{1},
\end{equation}
where $||\cdot||_{1}$ denotes $L_{1}$ norm.
\end{definition}

\vspace{1.5mm}

\noindent 4. \textbf{Mean Squared Error (MSE)}
\begin{definition}
MSE denotes a quadratic scoring rule that measures the average magnitude of $N$ predicted vectors $S = \{ \textbf{x}_{1}, \textbf{x}_{2}, ..., \textbf{x}_{N} \}$ and $N$ actual observations $S^{*} = \{\textbf{y}_{1}, \textbf{y}_{2}, ..., \textbf{y}_{N}\}$, the corresponding loss function is shown as:
\begin{equation}
\mathcal{L}_{MSE}(S, S^{*}) = \frac{1}{N} \sum\limits_{i=1}^{N} ||\textbf{x}_{i} - \textbf{y}_{i} ||_{2}^{2},
\end{equation}
where $||\cdot||_{2}$ denotes $L_{2}$ norm.
\end{definition}

\vspace{1.5mm}

\noindent 5. \textbf{Empirical Rademacher Complexity}
\begin{definition}
The empirical Rademacher complexity of a hypothesis space $\mathbb{H}$ of functions $h : \mathbb{R}^{n} \rightarrow \mathbb{R}$ with respect to $N$ samples $S = \{\textbf{x}_{1}, \textbf{x}_{2}, ..., \textbf{x}_{N}\}$ is:
\begin{equation}
\mathcal{\hat{R}}_{S}(\mathbb{H}) := \mathbb{E}_{\sigma_{1}, ..., \sigma_{N}} \left[\sup\limits_{h\in \mathbb{H}} \frac{1}{N} \sum\limits_{i=1}^{N} \sigma_{i} h(\textbf{x}_{i})   \right].
\end{equation}
where $\sigma_{1}, \sigma_{2}, ..., \sigma_{N}$ are the Rademacher random variables, which are defined by the uniform distribution as:
\begin{equation}
\label{eq:rrv}
\sigma_{i} = \begin{cases}
			$1$, \hspace{4mm}  \text{with probability $\frac{1}{2}$} \\
			$-1$, \hspace{3mm}  \text{with probability $\frac{1}{2}$}.
		\end{cases}
\end{equation}
\end{definition}
In \cite{fan2019selective, zhu2009human, wainwright2019high}, it was shown that a function class with larger empirical Rademacher complexity is more likely to be overfit to the training data.

\section{MAE Loss Function for Upper Bounding Empirical Rademacher Complexity}
\label{sec3}

The Lipschitz continuity property is fundamental to derive an upper bound of the estimated regression error. In the following in Lemma~\ref{mae_lip}, we show that the MAE loss function can ensure the Lipschitz continuity property. In Lemma~\ref{mse_lip}, we instead show that the property does not hold for MSE.

\begin{lemma}
\label{mae_lip}
The MAE loss function is $1$-Lipschitz continuous.
\end{lemma}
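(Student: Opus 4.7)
The plan is to reduce the Lipschitz property of the batched MAE loss to the reverse triangle inequality for the $L_1$ norm, applied one prediction at a time. First I would fix the target set $S^{*} = \{\mathbf{y}_1, \ldots, \mathbf{y}_N\}$ and view $\mathcal{L}_{MAE}(\cdot, S^{*})$ as a function of the prediction collection $S$. Concatenating the $\mathbf{x}_i$ into a single long vector, the loss becomes $\frac{1}{N}$ times the $L_1$ distance between that vector and the concatenation of the $\mathbf{y}_i$, so the question reduces to showing that $f(\mathbf{x}) := \|\mathbf{x} - \mathbf{y}\|_1$ is $1$-Lipschitz in $\mathbf{x}$ for any fixed $\mathbf{y}$.

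Next I would invoke the reverse triangle inequality, namely $\bigl|\|\mathbf{a}\|_1 - \|\mathbf{b}\|_1\bigr| \le \|\mathbf{a} - \mathbf{b}\|_1$, which is an immediate consequence of the triangle inequality applied in both directions. Setting $\mathbf{a} = \mathbf{x}_i - \mathbf{y}_i$ and $\mathbf{b} = \mathbf{x}'_i - \mathbf{y}_i$, the $\mathbf{y}_i$ cancels in the difference and we obtain the per-sample bound
\begin{equation*}
\bigl|\,\|\mathbf{x}_i - \mathbf{y}_i\|_1 - \|\mathbf{x}'_i - \mathbf{y}_i\|_1\bigr| \le \|\mathbf{x}_i - \mathbf{x}'_i\|_1.
\end{equation*}

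To lift this to the batched loss I would use the triangle inequality for absolute values followed by the per-sample bound, obtaining
\begin{equation*}
\bigl|\mathcal{L}_{MAE}(S, S^{*}) - \mathcal{L}_{MAE}(S', S^{*})\bigr| \le \frac{1}{N}\sum_{i=1}^{N}\|\mathbf{x}_i - \mathbf{x}'_i\|_1,
\end{equation*}
which is exactly the statement that $\mathcal{L}_{MAE}$ is $1$-Lipschitz with respect to the (averaged) $L_1$ metric on the prediction vectors, matching Definition~1 with $\beta = 1$ and $p = 1$.

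There is no real technical obstacle here; the only thing to be careful about is the formal identification of the ``input'' with respect to which Lipschitz continuity is being claimed, since the MAE is a function of two arguments. I would state explicitly that Lipschitzness is meant with the target held fixed (which is the scenario relevant to the Rademacher-complexity argument in Section~\ref{sec3}), so that the contrast with Lemma~\ref{mse_lip} is unambiguous and the lemma can be plugged directly into the contraction/Talagrand-type step used later to upper bound the empirical Rademacher complexity.
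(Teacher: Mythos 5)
Your proposal is correct and rests on exactly the same key step as the paper's proof: the reverse triangle inequality for the $L_1$ norm with the target held fixed. The only difference is that you carry the argument through the full $N$-sample average, whereas the paper's proof writes out only the single-sample case; your version is slightly more complete relative to Definition~2 but is not a different argument.
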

\begin{proof}
For two vectors $\textbf{x}_{1}, \textbf{x}_{2} \in \mathbb{R}^{q}$, and a target vector $\textbf{x} \in \mathbb{R}^{q}$, the MAE loss difference is
\begin{equation}
\begin{split}
&\hspace{4mm} \left| \mathcal{L}_{MAE}(\textbf{x}_{1}, \textbf{x})  -  \mathcal{L}_{MAE}(\textbf{x}_{2}, \textbf{x} ) \right| \\
&= |||\textbf{x}_{1} - \textbf{x}||_{1} - ||\textbf{x}_{2} - \textbf{x}||_{1} |	\\
&\le || \textbf{x}_{1} - \textbf{x}_{2} ||_{1}	\hspace{25mm} \text{(triangle inequality)}	\\
&= \mathcal{L}_{MAE}(\textbf{x}_{1}, \textbf{x}_{2}). 
\end{split}
\end{equation}
\end{proof}

\begin{lemma}
\label{mse_lip}
The MSE loss function cannot lead to the Lipschitz continuity property.
\end{lemma}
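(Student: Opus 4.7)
The plan is to disprove Lipschitz continuity by exhibiting a family of inputs on which the required inequality must fail for every finite constant $\beta$. Since Definition 1 quantifies over all $\mathbf{x}_1,\mathbf{x}_2\in\mathbb{R}^n$, it suffices to show that the ratio
\[
\frac{\bigl|\mathcal{L}_{MSE}(\mathbf{x}_1,\mathbf{x}) - \mathcal{L}_{MSE}(\mathbf{x}_2,\mathbf{x})\bigr|}{\|\mathbf{x}_1 - \mathbf{x}_2\|_p}
\]
is unbounded as $(\mathbf{x}_1,\mathbf{x}_2)$ ranges over $\mathbb{R}^q\times\mathbb{R}^q$. The natural algebraic handle is the elementary identity
\[
\|\mathbf{x}_1 - \mathbf{x}\|_2^2 - \|\mathbf{x}_2 - \mathbf{x}\|_2^2 = \langle \mathbf{x}_1 - \mathbf{x}_2,\ \mathbf{x}_1 + \mathbf{x}_2 - 2\mathbf{x}\rangle,
\]
which makes the essential asymmetry explicit: the left-hand difference scales with the magnitude of $\mathbf{x}_1+\mathbf{x}_2-2\mathbf{x}$, whereas Lipschitz continuity would force it to be controlled only by $\|\mathbf{x}_1-\mathbf{x}_2\|_p$.

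The key steps I would carry out, in order, are the following. First, fix a target $\mathbf{x}$ and a nonzero direction $\mathbf{u}$, and consider $\mathbf{x}_1 = \mathbf{x} + t\mathbf{u}$ and $\mathbf{x}_2 = \mathbf{x} + (t+\delta)\mathbf{u}$ for a small fixed $\delta>0$ and a free parameter $t\in\mathbb{R}$. Second, compute directly, using the identity above, that
\[
\bigl|\mathcal{L}_{MSE}(\mathbf{x}_1,\mathbf{x}) - \mathcal{L}_{MSE}(\mathbf{x}_2,\mathbf{x})\bigr| = |\delta|\,|2t+\delta|\,\|\mathbf{u}\|_2^2,
\]
while $\|\mathbf{x}_1-\mathbf{x}_2\|_p = |\delta|\,\|\mathbf{u}\|_p$. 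Third, suppose for contradiction that a constant $\beta<\infty$ witnesses the Lipschitz property; dividing the two expressions would then give $|2t+\delta|\,\|\mathbf{u}\|_2^2 \le \beta\,\|\mathbf{u}\|_p$ for every $t\in\mathbb{R}$, which is impossible since the left side is unbounded as $t\to\infty$. Fourth, conclude that no finite $\beta$ can satisfy Definition~1, so $\mathcal{L}_{MSE}$ is not Lipschitz continuous on $\mathbb{R}^q$.

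I do not expect any significant obstacle; the argument is essentially a one-line contradiction once the right family of inputs is chosen. The only subtlety worth flagging is that MSE is in fact locally Lipschitz on any bounded subset of $\mathbb{R}^q$, so the proof must exploit the unboundedness of the domain implicit in Definition~1 — that is why the parameter $t$ is taken to infinity rather than held fixed. This also clarifies the contrast with Lemma~\ref{mae_lip}: the triangle inequality used there produces a constant independent of the magnitudes of $\mathbf{x}_1,\mathbf{x}_2,\mathbf{x}$, whereas the gradient of the squared norm grows linearly with the point of evaluation, precluding any uniform Lipschitz constant.
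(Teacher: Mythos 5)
Your proof is correct, and it takes a genuinely different route from the paper's. The paper fixes the target at $\mathbf{x} = 2\mathbf{x}_{2}$, assumes $\|\mathbf{x}_{2}\|_{2}^{2} > \|\mathbf{x}_{1}\|_{2}^{2}$, and derives the single strict inequality $\left| \|\mathbf{x}_{1}-\mathbf{x}\|_{2}^{2} - \|\mathbf{x}_{2}-\mathbf{x}\|_{2}^{2} \right| > \|\mathbf{x}_{1}-\mathbf{x}_{2}\|_{2}^{2}$, i.e., it exhibits one family of points violating the analogue of the $1$-Lipschitz bound established for MAE in Lemma~\ref{mae_lip} (with the squared distance on the right-hand side). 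Your argument instead parametrizes along a ray, $\mathbf{x}_{1} = \mathbf{x} + t\mathbf{u}$ and $\mathbf{x}_{2} = \mathbf{x} + (t+\delta)\mathbf{u}$, and shows the difference quotient $|2t+\delta|\,\|\mathbf{u}\|_{2}^{2}/\|\mathbf{u}\|_{p}$ diverges as $t \to \infty$. What your version buys is strictly more: the paper's counterexample only refutes the constant $\beta = 1$ (and measures the denominator by $\|\mathbf{x}_{1}-\mathbf{x}_{2}\|_{2}^{2}$ rather than a norm, so it does not literally address Definition~1), whereas your computation rules out \emph{every} finite Lipschitz constant with respect to any $\|\cdot\|_{p}$, which is what ``not Lipschitz continuous'' actually requires. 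Your closing remark that MSE is locally Lipschitz on bounded sets, so the failure must come from letting the points run off to infinity, is also a useful clarification that the paper omits.
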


\begin{proof}
$\forall \textbf{x}_{1}, \textbf{x}_{2} \in \mathbb{R}^{q}$, and $\text{ } ||\textbf{x}_{2}||_{2}^{2} > ||\textbf{x}_{1}||_{2}^{2}$, there is
\begin{equation}
\label{eq:11}
||\textbf{x}_{1}  - \textbf{x}_{2}||_{2}^{2} = ||\textbf{x}_{1}||_{2}^{2} +||\textbf{x}_{2}||_{2}^{2}  - 2\textbf{x}_{1}^{T}\textbf{x}_{2} .
\end{equation}
Next, we assume $\textbf{x} = 2\textbf{x}_{2}$, we have that
\begin{equation}
\label{eq:12}
\begin{split}
&\hspace{5mm} ||\textbf{x}_{1} - \textbf{x}||_{2}^{2} - ||\textbf{x}_{2} - \textbf{x} ||_{2}^{2} 	\\
&= ||\textbf{x}_{1}||_{2}^{2} - 2\textbf{x}_{1}^{T}\textbf{x} - ||\textbf{x}_{2}||_{2}^{2} + 2\textbf{x}_{2}^{T}\textbf{x}	\\
&= ||\textbf{x}_{1}||_{2}^{2} - 4\textbf{x}_{1}^{T}\textbf{x}_{2} - ||\textbf{x}_{2}||_{2}^{2} + 4 ||\textbf{x}_{2}||_{2}^{2}	\\
&= ||\textbf{x}_{1}||_{2}^{2} - 4\textbf{x}_{1}^{T} \textbf{x}_{2} + 3 ||\textbf{x}_{2}||_{2}^{2 }.
\end{split}
\end{equation}

By reducing Eq. (\ref{eq:11}) from Eq. (\ref{eq:12}),
\begin{equation}
\begin{split}
&\hspace{5mm} ||\textbf{x}_{1} - \textbf{x}||_{2}^{2} - ||\textbf{x}_{2} - \textbf{x} ||_{2}^{2}  - ||\textbf{x}_{1}  - \textbf{x}_{2}||_{2}^{2} \\
&= 2||\textbf{x}_{2}||_{2}^{2} - 2\textbf{x}_{1}^{T}\textbf{x}_{2} 		\\
&> ||\textbf{x}_{2}||_{2}^{2} + ||\textbf{x}_{1}||_{2}^{2}  - 2\textbf{x}_{1}^{T} \textbf{x}_{2} \\
&= || \textbf{x}_{1} - \textbf{x}_{2} ||_{2}^{2}	\\
&>0,
\end{split}
\end{equation}
we derive that
\begin{equation}
 \left| ||\textbf{x}_{1} - \textbf{x}||_{2}^{2} - ||\textbf{x}_{2} - \textbf{x} ||_{2}^{2}  \right|  > ||\textbf{x}_{1}  - \textbf{x}_{2}||_{2}^{2}, 
\end{equation}
which contradicts the property of Lipschitz continuity. Thus, the MSE loss function is not Lipschitz continuous.
\end{proof}

We now discuss the characteristic of Lipschitz continuity derived from the MAE loss function for upper bounding the estimation error $\mathcal{T}$, which is associated with the generalization capability and defined as:
\begin{equation}
\label{eq:es}
\mathcal{T} = \sup\limits_{f_{v} \in \mathbb{F}} \left| \mathcal{L}(f_{v}) - \mathcal{\hat{L}}(f_{v})   \right| \le \mathcal{\hat{R}}_{S}(\mathbb{L}).
\end{equation}
where $\mathbb{F} =\{f_{v} : \mathbb{R}^{d} \rightarrow \mathbb{R}^{q} \}$ is a family of DNN based vector-to-vector functions and $\mathbb{L} = \{ \mathcal{L}(f_{v}, f_{v}^{*}) : \mathbb{R}^{d} \times \mathbb{R}^{d} \rightarrow \mathbb{R}, f_{v} \in \mathbb{F}\}$ denotes the family of generalized MAE loss functions. In~\cite{qi2020theory}, we have shown that the estimation error $\mathcal{T}$ can be upper bounded by the empirical Rademacher complexity $\mathcal{\hat{R}}_{S}(\mathbb{L})$.

In~\cite{qi2020theory}, we have also shown that the estimation error $\mathcal{T}$ can be further upper-bounded as:
\begin{equation}
\mathcal{T} = \sup\limits_{f_{v} \in \mathbb{F}} \left| \mathcal{L}(f_{v}) - \mathcal{\hat{L}}(f_{v})   \right| \le \mathcal{\hat{R}}_{S}(\mathbb{L}) \le \mathcal{\hat{R}}_{S}(\mathbb{F}),
\end{equation}
where $\mathcal{\hat{R}}_{S}(\mathbb{F})$ is defined as:
\begin{equation}
\mathcal{\hat{R}}_{S}(\mathbb{F}) = \frac{1}{N} \mathbb{E}_{\boldsymbol{\sigma}}\left[ \sup\limits_{f_{v} \in \mathbb{F}} \sum\limits_{i=1}^{N} (\sigma_{i} \textbf{1})^{T} f_{v}(\textbf{x}_{i})\right], 
\end{equation}
where $\boldsymbol{\sigma} = \{\sigma_{1}, \sigma_{2}, ..., \sigma_{N}\}$ denotes a set of Rademacher random variables as shown in Definition 4. 

\section{MAE loss function for DNN robustness against additive noises}
\label{sec4}

We now show that the MAE loss function can give an upper bound for regression errors to ensure DNN robustness against additive noises.

\begin{theorem}
\label{thm2}
For an objective function $h = \mathcal{L} \circ f_{v}: \mathbb{R}^{d} \rightarrow \mathbb{R}$ with the MAE loss function $\mathcal{L}: \mathbb{R}^{q} \rightarrow \mathbb{R}$ and a vector-to-vector regression function $f_{v}: \mathbb{R}^{d} \rightarrow \mathbb{R}^{q}$, the difference of the objectives for adding noise $\boldsymbol{\eta}$ to signal $\textbf{x}$ is bounded as:
\begin{equation}
\label{eq:lbound}
\left|  h(\textbf{x} + \boldsymbol{\eta}) - h(\textbf{x}) \right|  \le  L_{2} ||\boldsymbol{\eta}||_{2}, 
\end{equation}
where $L_{2} = \sum_{i=1}^{q} L_{2, i}$ is the Lipschitz constant for DNN based vector-to-vector regression, and each $L_{2, i}$ is shown as:
\begin{equation}
L_{2, i} = \sup\{ ||\nabla f_{i}(\textbf{x})||_{2}: \textbf{x} \in \mathbb{R}^{d} \}.
\end{equation}
\end{theorem}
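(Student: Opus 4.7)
The plan is to split the estimate into two Lipschitz bounds: the outer MAE loss and the inner vector-to-vector network $f_{v}$. First, I would invoke Lemma~\ref{mae_lip}, which establishes that the MAE loss is $1$-Lipschitz continuous with respect to the $L_{1}$ norm. Applied to the composition $h = \mathcal{L} \circ f_{v}$, holding the reference target inside $\mathcal{L}$ fixed, this immediately gives
\begin{equation*}
\left| h(\textbf{x} + \boldsymbol{\eta}) - h(\textbf{x}) \right| \le \left\| f_{v}(\textbf{x} + \boldsymbol{\eta}) - f_{v}(\textbf{x}) \right\|_{1} = \sum\limits_{i=1}^{q} \left| f_{i}(\textbf{x} + \boldsymbol{\eta}) - f_{i}(\textbf{x}) \right|,
\end{equation*}
which reduces the problem to controlling each scalar output component of $f_{v}$.

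Next, I would bound each component difference by integrating the directional derivative of $f_{i}$ along the straight-line segment joining $\textbf{x}$ and $\textbf{x} + \boldsymbol{\eta}$. The fundamental theorem of calculus combined with the Cauchy--Schwarz inequality yields
\begin{equation*}
\left| f_{i}(\textbf{x} + \boldsymbol{\eta}) - f_{i}(\textbf{x}) \right| = \left| \int_{0}^{1} \nabla f_{i}(\textbf{x} + t \boldsymbol{\eta})^{T} \boldsymbol{\eta}\, dt \right| \le \sup\limits_{t \in [0,1]} \| \nabla f_{i}(\textbf{x} + t \boldsymbol{\eta}) \|_{2} \cdot \| \boldsymbol{\eta} \|_{2} \le L_{2,i} \| \boldsymbol{\eta} \|_{2}.
\end{equation*}
Summing over $i = 1, 2, \ldots, q$ and chaining the two estimates produces $|h(\textbf{x} + \boldsymbol{\eta}) - h(\textbf{x})| \le L_{2} \| \boldsymbol{\eta} \|_{2}$ with $L_{2} = \sum_{i=1}^{q} L_{2,i}$, exactly as claimed.

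The main obstacle I anticipate is the regularity of $f_{v}$: a typical DNN with ReLU-type activations is only piecewise differentiable, so the integral representation needs a mild justification. One way is to invoke Rademacher's theorem, noting that a Lipschitz DNN is differentiable almost everywhere and that the classical gradient bound still controls the one-dimensional increment $t \mapsto f_{i}(\textbf{x} + t \boldsymbol{\eta})$, which is absolutely continuous. Alternatively, one can reinterpret $L_{2,i}$ as the essential supremum of the Clarke subgradient norm. Once this technicality is folded into the definition of $L_{2,i}$, the remainder is a clean two-step chaining of MAE-Lipschitz continuity with the per-coordinate gradient bound described above.
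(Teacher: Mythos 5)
Your proposal is correct and follows essentially the same route as the paper: the paper also first strips off the MAE loss via the triangle inequality (i.e., its $1$-Lipschitz continuity, Lemma~\ref{mae_lip}) to reduce the problem to bounding $\|f_{v}(\textbf{x}+\boldsymbol{\eta})-f_{v}(\textbf{x})\|_{1}$, and then invokes a cited gradient-based Lipschitz lemma (Lemma~\ref{lem2}, adapted from Paulavi\v{c}ius and \v{Z}ilinskas) to obtain the factor $L_{2}\|\boldsymbol{\eta}\|_{2}$ with $L_{2}=\sum_{i=1}^{q}L_{2,i}$. Your line-integral plus Cauchy--Schwarz argument is in effect a self-contained proof of that cited lemma for the $p=q=2$ case, and your discussion of almost-everywhere differentiability for ReLU networks addresses a technicality the paper leaves implicit by simply citing the Lipschitz continuity of such networks.
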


\begin{proof}
To prove Theorem~\ref{thm2}, we first introduce Lemma~\ref{lem2}, which is achieved by the modification of Theorem 1 in \cite{paulavivcius2006analysis}.
\begin{lemma}
\label{lem2}
For a vector-to-vector regression function $f:\mathbb{R}^{d} \rightarrow \mathbb{R}^{q}$ with the property of Lipschitz continuity, $\forall \textbf{x}, \textbf{y} \in \mathbb{R}^{d}$, there exists an inequality as:
\begin{equation}
||f(\textbf{x}) - f(\textbf{y})||_{1} \le L_{p} || \textbf{x} - \textbf{y} ||_{q},
\end{equation}
where $L_{p} = \sup\{ ||\nabla f(\textbf{x})||_{p}: \textbf{x} \in \mathbb{R}^{d} \}$ is a Lipschitz constant, and $\frac{1}{p} + \frac{1}{q} = 1, p, q \ge 1$.
\end{lemma}

We employ the fact that DNNs with the ReLU activation function are Lipschitz continuous~\cite{fazlyab2019efficient}. Then, based on both triangle inequality and Lemma~\ref{lem2}, we can upper bound the difference of objective functions with and without the additive noise $\boldsymbol{\eta}$ as:
\begin{equation*}
\begin{split}
| h(\textbf{x} + \boldsymbol{\eta}) - h(\textbf{x}) | &=  \left|||f(\textbf{x} + \boldsymbol{\eta})||_{1} - ||f(\textbf{x})||_{1} \right|	\\
&\le \left|\left|f(\textbf{x} + \boldsymbol{\eta}) - f(\textbf{x}) \right|\right|_{1}	 \hspace{5mm}\text{(triangle ineq.)} \\
&= L_{2} ||\boldsymbol{\eta}||_{2}  \hspace{28.5mm}\text{(Lemma 2)} \\
\end{split}
\end{equation*}
which completes the proof.
\end{proof}

Theorem~\ref{thm2} holds for the MAE loss function but is not valid for MSE loss because it is not Lipschitz continuous. In other words, the difference of additive noises imposed upon the DNN based vector-to-vector function is unbounded on the MSE loss function but the MAE can guarantee an upper bound. 

The upper bound makes more sense when the additive noise is small because the upper bound suggests that the imposed noise cannot lead to significant performance degradation. 

\section{Connection of MAE Loss Function to Laplacian Distribution}
\label{sec5}
We now separately link the  MAE and MSE loss functions to Laplacian distribution (LD) and Gaussian distribution (GD) based loss functions as defined in~\cite{chai2019using}. Both LD and GD based losses for DNN-based multivariate regression were experimentally compared and contrasted in~\cite{chai2019using}, and it was shown that the LD loss can attain better vector-to-vector regression accuracies than those obtained optimizing GD losses.

For $N$ input samples $\{\textbf{x}_{1}, \textbf{x}_{2}, ..., \textbf{x}_{N}\}$ and $N$ target vectors $\{ \textbf{y}_{1}, \textbf{y}_{2}, ..., \textbf{y}_{N} \}$, assuming $f: \mathbb{R}^{d} \rightarrow \mathbb{R}^{q}$ is a vector-to-vector regression function, we change the MAE loss function as:
\begin{equation}
\begin{split}
\mathcal{L}_{MAE}(S, S^{*}) &= \frac{1}{N} \sum\limits_{i=1}^{N} ||f(\textbf{x}_{n}) - \textbf{y}_{n}||_{1} 	\\
&= \frac{1}{N} \sum\limits_{n=1}^{N}\sum\limits_{m=1}^{d} \left| f_{m}(\textbf{x}_{n}) - y_{n, m} \right|		\\
&= \frac{1}{N} \sum\limits_{n=1}^{N} \sum\limits_{m=1}^{d} \frac{|\hat{f}_{m}(\textbf{x}_{n}) - \hat{y}_{n, m}|}{\alpha_{m}} ,
\end{split}
\end{equation}
where $\hat{f}_{m}(\textbf{x}_{n}) = \alpha_{m} f_{m}(\textbf{x}_{n})$, $\hat{y}_{n, m} = \alpha_{m} y_{n, m}$, and $\alpha_{m}$ is the variance of dimension $m$.

To link the LD based loss function $\mathcal{L}_{LD}(S, S^{*})$ in~\cite{chai2019using}, an additional term $N\sum_{m=1}^{d} \ln \alpha_{m}$ is added to $\mathcal{L}_{MAE}(S, S^{*})$, and we obtain
\begin{equation}
\mathcal{L}_{LD}(S, S^{*}) = \mathcal{L}_{MAE}(S, S^{*}) + N\sum\limits_{m=1}^{d} \ln \alpha_{m}.
\end{equation}

Moreover, an MSE based loss function can be modified as:
\begin{equation}
 \mathcal{L}_{MSE}(S, S^{*}) = \frac{1}{N} \sum\limits_{n=1}^{N} \sum\limits_{m=1}^{d} \frac{|\hat{f}_{m}(\textbf{x}_{n}) - \hat{y}_{n, m}|^{2}}{\alpha^{2}_{m}}.
\end{equation}
Then, the GD based loss $\mathcal{L}_{GD}(S, S^{*})$ can be derived by adding the term $N\sum_{m=1}^{d} \ln \alpha_{m}$ to the MSE loss $ \mathcal{L}_{MSE}(S, S^{*})$,
\begin{equation}
\mathcal{L}_{GD}(S, S^{*}) = \mathcal{L}_{MSE}(S, S^{*}) + N\sum\limits_{m=1}^{d} \ln \alpha_{m}.
\end{equation}

We can observe that $\mathcal{L}_{MAE}(S, S^{*})$ and $ \mathcal{L}_{MSE}(S, S^{*})$ are special cases of $\mathcal{L}_{LD}(S, S^{*}) $ and $\mathcal{L}_{GD}(S, S^{*})$ without concerning the variance terms. When $\forall m\in [d]$, the variance $\alpha_{m}$ is a constant, $ \mathcal{L}_{LD}(S, S^{*})$ and $\mathcal{L}_{GD}(S, S^{*})$ exactly correspond to $ \mathcal{L}_{MAE}(S, S^{*})$ and $\mathcal{L}_{MSE}(S, S^{*})$, respectively.

Since the work~\cite{chai2019using} suggests that the LD based loss function can achieve better regression performance than the GD based one, we show that the MAE based loss function can also keep the advantage over the MSE when the variance related terms are the same. Our experiments of speech enhancement in Section~\ref{sec6}, where both MAE and MSE loss functions are involved, are used to verify that.

\section{Experiments}
\label{sec6}
This section presents our speech enhancement experiments to corroborate the aforementioned theorems. The goal of the experiments is to verify that MAE can achieve better regression performance than  MSE under various noisy conditions because of the ensured upper bounds on the MAE loss functions for DNN-based vector-to-vector regression.

\subsection{Data Preparation}
Our experiments were conducted on the Edinburgh noisy speech database, where there were a total $23075$ and $824$ clean utterances for training and testing, respectively. The noisy training dataset at four SNR levels (15 dB, 10 dB, 5 dB, 0 dB), was obtained using the following noises: a domestic noise (inside a kitchen), an office noise (in a meeting room), three public space noises (cafeteria, restaurant, subway station), two transportation noises (car and metro) and a street noise (busy traffic intersection). In sum, we had 40 different noisy types to synthesize $23075$ noisy training speech utterances. As for the noisy test set, the noisy conditions include a domestic (living room), an office noise (office space), one transport (bus) and two street noises (open area cafeteria and a public square) at four SNR values (17.5 dB, 12.5 dB, 7.5 dB, 2.5 dB). Thus, there were $20$ various noisy conditions for generating totally $824$ noisy test speech utterances. The Edinburgh noisy speech corpus provides a more challenging speech scenario, which allows us to better support our Theorems.

\subsection{Experimental Setup}
In this work, DNN based vector-to-vector regression models followed feed-forward architectures, where the inputs were normalized log-power spectral (LPS) feature vectors of noisy speech~\cite{deng2004enhancement, qi2013auditory}, and the outputs were LPS features of either clean or enhanced speech. At training time, clean LPS vectors were assigned to the top layer of DNN to function as targets. At test time, the top layer of DNN generated enhanced LPS vectors. The architecture of DNN had the structure $771$-$800$-$800$-$800$-$800$-$800$-$1600$-$257$, which corresponds to Input$-$Hidden$-$Output. The ReLU activation function was employed in the hidden neurons, and the top layer was connected to a linear function for vector-to-vector regression. The enhanced waveforms were reconstructed based on the overlap-add method as shown in \cite{xu2015regression}. The technique of global variance equalization~\cite{silen2012ways} was utilized to improve the subjective perception of speech enhancement. At training time, the standard back-propagation (BP) algorithm was adopted to update the model parameters. The MAE and MSE loss functions were separately used to measure the difference between normalized LPS features and the reference ones. The stochastic gradient descent (SGD) based optimizer with a learning rate of $1 \times 10^{-3}$ and a momentum rate of $0.4$ was set up for the BP algorithm. Moreover, noise-aware training (NAT)~\cite{xu2014dynamic} was also used to enable non-stationary noise awareness. Context information was accounted at the input by using $3$ LPS vectors by concatenating frames within a sliding window~\cite{feature3, qi2013bottleneck, qi2016robust}. During the training time, the maximum $20$ epochs were set, and one-tenth of training data were randomly split into a validation set. If the performance of the model on the validation dataset started to degrade, the training process was stopped. 

The evaluation metrics were based on three types of criteria: MAE, MSE, perceptual evaluation of speech quality (PESQ)~\cite{rix2001perceptual}, and short-time objective intelligibility (STOI)~\cite{taal2010short}. PESQ, which ranges from $-0.5$ to $4.5$, is an indirect evaluation which is highly correlated with speech quality. A higher PESQ score corresponds to a better perception quality. Similarly, the STOI score, which ranges from 0 to 1, also refers to a measurement of predicting the intelligibility of noisy or enhanced speech. A higher STOI score corresponds to a better speech intelligibility. 

\subsection{Evaluation Results}

Using the DNN models trained with the MAE criterion (DNN-MAE) and the MSE criterion (DNN-MSE), Table~\ref{tab:tab1} lists the MAE values for speech enhancement experiments with test data. The MAE values evaluated with DNN-MAE in the top row are always lower than those in the bottom row evaluated with DNN-MSE under the same noisy condition in each column. More specifically, MAE scores by DNN-MAE achieves a lower value than DNN-MSE (0.7812 vs. 0.8278). Similarly, DNN-MAE achieves a lower MSE score than DNN-MSE (0.7954 vs. 0.8371). Besides, the MAE scores for both DNN-MAE and DNN-MSE are consistently lower than the MSE values. 

\begin{table}[bh]\footnotesize
\center
\renewcommand{\arraystretch}{1.3}
\caption{The MAE and MSE Values on Edinburgh speech corpus.}
\label{tab:tab1}
\begin{tabular}{|c||c|c|}
\hline
Models 	   &   MAE     &    MSE	   	\\
\hline
DNN-MAE	   &	0.7812  &	  0.7954		 \\
\hline
DNN-MSE   &	0.8278  &	  0.8371		\\
\hline
\end{tabular}
\vspace{-2mm}
\end{table}

\begin{table}[bh]\footnotesize
\center
\renewcommand{\arraystretch}{1.3}
\caption{The PESQ and STOI scores on Edinburgh speech corpus.}
\label{tab:tab2}
\begin{tabular}{|c||c|c|}
\hline
Models 		&  PESQ   &  STOI     	\\
\hline
DNN-MAE		& 2.93 	 &  0.8509		\\
\hline
DNN-MSE 	& 2.85	&  0.8317		\\
\hline
\end{tabular}
\vspace{-2mm}
\end{table}

Moreover, Table~\ref{tab:tab2} shows PESQ and STOI scores obtained with the DNN-MAE and DNN-MSE models. It can be seen that the DNN model trained with the MAE criterion consistently outperforms models trained with the MSE criterion ($2.93$ vs. $2.85$ for PESQ, and $0.8509$ vs. $0.8317$ for STOI), which further confirms that MAE is a good objective function to optimize when training DNNs for speech enhancement. 

Furthermore, the performance advantages of DNN-MAE over DNN-MSE corresponds to the aforementioned theorems: (1) the upper bound in Eq.~(\ref{eq:lbound}) ensures more robust performance against the additive noise; (2) the performance gain is consistent with the connection between MAE loss function and the Laplacian distribution. 

\section{Conclusion}
This work investigates the advantages of the MAE loss function for DNN based vector-to-vector regression. On one hand, we emphasize that the Lipschitz continuity property can not only ensure a performance upper bound on DNN-based vector-to-vector regression but also give an upper bound to predict the robustness against additive noises. On the other hand, we associate the MAE loss function with Laplacian distribution. Our experiments show that DNN based regression optimized with the MAE loss function can achieve lower loss values than those obtained with the MSE counterpart. Moreover, the MAE loss function can also lead to better-enhanced speech quality in terms of the PESQ and STOI scores. Our empirical results are in line with the proposed theorems for MAE and indirectly reflect that the MAE loss functions can benefit from its related upper bounds as shown in this study.

\label{sec7}
\newpage

\bibliographystyle{IEEEtran}
\bibliography{speech}

\end{document}